\newtheorem{theorem}{Theorem}[section]
\newcolumntype{R}[2]{%
    >{\adjustbox{angle=#1,lap=\width-(#2)}\bgroup}%
    l%
    <{\egroup}%
}
\par\rule{\linewidth}{-0.5pt}}
\begin{document}
\addtolength{\tabcolsep}{-5pt}
\DeclareGraphicsExtensions{.pdf,.png,.gif,.jpg}

\title{Graph Fourier Transform based on Directed Laplacian}

\author{Rahul~Singh,
       Abhishek~Chakraborty,~\IEEEmembership{Graduate Student Member,~IEEE,}
       and~B.~S.~Manoj,~\IEEEmembership{Senior Member,~IEEE}% <-this % stops a space
\thanks{R. Singh, A. Chakraborty, and B. S. Manoj are with the Department
of Avionics, Indian Institute of Space Science and Technology, Thiruvananthapuram, Kerala 695547 India (e-mail: rahul.s.in@ieee.org, abhishek2003slg@ieee.org, bsmanoj@ieee.org).}% <-this % stops a space
}

\maketitle

\begin{abstract}
\boldmath
In this paper, we redefine the Graph Fourier Transform (GFT) under the DSP$_\mathrm{G}$ framework. We consider the Jordan eigenvectors of the directed Laplacian as graph harmonics and the corresponding eigenvalues as the graph frequencies. For this purpose, we propose a shift operator based on the directed Laplacian of a graph. Based on our shift operator, we then define total variation of graph signals, which is used in frequency ordering. We achieve natural frequency ordering and interpretation via the proposed definition of GFT. Moreover, we show that our proposed shift operator makes the LSI filters under DSP$_\mathrm{G}$ to become polynomial in the directed Laplacian.
\end{abstract}

\begin{IEEEkeywords}
Graph signal processing, graph Fourier transform, directed Laplacian.
\end{IEEEkeywords}

\section{Introduction}
\label{sec:Introduction}
\IEEEPARstart
{D}{ata} defined on network-like structures are encountered in a large number of scenarios including molecular interaction in biological systems, computer networks, sensor networks, social and citation networks, the Internet and the World Wide Web, power grids, transportation networks~\cite{Newmanbook2010,Watts2004}, and many more. Such data can be visualized as a set of scalar values, known as a graph signal, lying on a particular structure, i.e., a graph. In computer graphics, data defined on any geometrical shape described by polygon meshes can be formulated as a graph signal~\cite{graphics1996}. \\
%%%%%%%%%%%%%%%%%%%%%%%%%%%%%%%%%%%%%%%%%%%%%%%%%%%
\indent The irregular structure of the underlying graph, as opposed to the regular structure in case of time-series and image signals dealt in classical signal processing~\cite{DSPOpp,Mallat}, imposes a great challenge in analysis and processing of graph signals. Fortunately, recent work toward the development of important concepts and tools, extending classical signal processing theory, including sampling and interpolation on graphs~\cite{Chen2015,Gadde2013,Gadde2014}, graph-based transforms~\cite{Wavelet2011,Shuman2013,Shuman2015,Moura2014,MouraBig2014}, and graph filters~\cite{Moura2013,IIR2015} have enriched the field of graph signal processing. These tools have been utilized in solving a variety of problems such as signal recovery on graphs~\cite{Ribiero2015,Recovery2015,Reconstruction2015}, clustering and community detection~\cite{Tremblay2014,Dong2014}, graph signal denoising~\cite{Denoising2014}, and semi-supervised classification on graphs\cite{Semi2013}. \\
\indent Transforms aimed at frequency analysis of graph signals facilitate efficient handling of the data and remain at the heart of graph signal processing. In literature, there exist two frameworks for frequency analysis and processing of graph signals: (i) Laplacian matrix based approach, and (ii) weight matrix based approach. The existing Laplacian based approach~\cite{Wavelet2011,Shuman2013} is limited to the analysis of graph signals lying on undirected graphs with real non-negative weights. In the Laplacian based approach, eigendecomposition of the graph Laplacian $\mathbf{L}$ is used for frequency analysis of graph signals. Mathematically, Graph Fourier Transform (GFT) of a graph signal $\mathbf{f}$ is defined as $\mathbf{\hat{f}} = \mathbf{U}^T \mathbf{f}$, where $\mathbf{U}~=~[\mathbf{u_0}~\mathbf{u_1}~\ldots~\mathbf{u_{N-1}}]$ is the matrix in which columns are the eigenvectors of $\mathbf{L}$. Here, the frequency ordering is based on the quadratic form and turns out to be ``natural''. Natural frequency ordering means that the small eigenvalues correspond to low frequencies and vice-versa. On the other hand, the weight matrix based framework~\cite{Moura2013,Moura2014}, also referred as the Discrete Signal Processing on Graphs (DSP$_{\mathrm{G}}$) framework, has been built on the graph shift operator. The weight matrix~$\mathrm{W}$ of the graph acts as the shift operator and the shifted version of a graph signal~$\mathbf{f}$ can be found as $\mathbf{\tilde{f}} = \mathrm{W}\mathbf{f}$. Shift operator is treated as the elementary Linear Shift Invariant (LSI) graph filter, which essentially is a polynomial in~$\mathrm{W}$. Based on the shift operator, authors define total variation (TV) on graphs that they utilize for ordering of graph frequencies. In DSP$_{\mathrm{G}}$, the graph Fourier transform of a graph signal $\mathbf{f}$ is defined as~$\mathbf{\hat{f}} = \mathbf{V}^{-1} \mathbf{f}$, where $\mathbf{V}$ is the matrix with the Jordan eigenvectors of $\mathrm{W}$ as its columns ($\mathrm{W}$ is decomposed in Jordan canonical form as $\mathrm{W} = \mathbf{VJV}^{-1}$). The Jordan eigenvectors of $\mathrm{W}$ are used as graph Fourier basis and the eigenvalues of~$\mathrm{W}$ act as graph frequencies. \\
\indent Although DSP$_{\mathrm{G}}$ is applicable to directed graphs with negative or complex edge weights as well, it does not provide ``natural'' intuition of frequency. The eigenvalue of $\mathrm{W}$ with \textit{maximum} absolute value acts as the \textit{lowest} frequency and as one moves away from this eigenvalue in the complex frequency plane, the frequency increases. Thus, frequency ordering is not natural and there is an overhead of frequency ordering as well. Also, the interpretation of frequency is not intuitive --- for example, in general, a constant graph signal results in low as well as high frequency components in the spectral domain. \\
\indent In this paper, we redefine Graph Fourier Transform (GFT) under DSP$_\mathrm{G}$. In the new definition of GFT, the Jordan eigenvectors of the directed Laplacian matrix are treated as the graph Fourier basis and the corresponding eigenvalues constitute the graph spectrum. The directed Laplacian matrix of a graph is a simple extension of the symmetric Laplacian discussed in~\cite{Shuman2013} to directed graphs. To redefine GFT under DSP$_\mathrm{G}$, we first propose a shift operator derived from the directed Laplacian. Then, we utilize this shift operator to define total variation (TV) of a graph signal which is used for frequency ordering. We observe ``natural'' frequency ordering as well as better intuition as compared to the existing GFT definition under DSP$_{\mathrm{G}}$ approach. Moreover, the new definition of GFT links the DSP$_\mathrm{G}$ framework to the existing Laplacian based approach. We also show that considering our proposed shift operator, the LSI filters under DSP$_\mathrm{G}$ become polynomial in the directed Laplacian. \\
%%%%%%%%%%%%%%%%%%%%%%%%%%%%%%%%%%%%%%%%%%%%%%%%%%%%%%%%%%%%%%%%%%%%%%%%%
%**************************************************************************
\indent Rest of this paper is organized as follows. In Section~\ref{sec:FreqAnalysis}, we discuss the directed Laplacian of a graph followed by the proposed shift operator and total variation on graph. Then, we redefine the Graph Fourier Transform based on directed Laplacian. We describe LSI graph filters in Section~\ref{sec:Filtering} and then conclude the paper in Section~\ref{sec:Conclusion}. 
%%%%%%%%%%%%%%%%%%%%%%%%%%%%%%%%%%%%%%%%%%%%%%%%%%%%%%%%%%%%%%%%%%%%%%%%%%%
%%%%%%%%%%%%%%%%%%%%%%%%%%%%%%%%%%%%%%%%%%%%%%%%%%%%%%%%%%%%%%%%%%%%%%%%%%%
%%%%%%%%%%%%%%%%%%%%%%%%%%%%%%%%%%%%%%%%%%%%%%%%%%%%%%%%%%%%%%%%%%%%%%%%%%
\section{Frequency Analysis of Graph Signals}
\label{sec:FreqAnalysis}
First, we present directed Laplacian matrix of a graph and then, derive the shift operator from it. Next, we define total variation of a graph signal which is utilized in frequency ordering. Finally, we redefine Graph Fourier Transform under the DSP$_\mathrm{G}$ framework.
%==========================================================================
\subsection{Graph Signals}
\label{subsec:GraphSignals}
A graph signal is a collection of values defined on a complex and irregular structure modeled as a graph. A graph is represented as ${\cal{G}} = ({\cal{V}} , \mathrm{W})$, where ${\cal{V}} = \{v_0, v_1,\ldots, v_{N-1}\}$ is the set of vertices (or nodes) and $\mathrm{W}$ is the weight matrix of the graph in which an element $w_{ij}$ represents the weight of the directed edge from node $j$ to node $i$. Moreover, a graph signal is represented as an $N$-dimensional vector $\mathbf{f} = [f (1),f (2),\ldots, f (N)] ^T \in \mathbb{C}^{N}$, where $f(i)$ is the value of the graph signal at node $i$ and $N = |\cal{V}|$ is the total number of nodes in the graph.
%==========================================================================
\subsection{Directed Laplacian}
\label{subsec:dirLaplacian}
As discussed in~\cite{Shuman2013}, the graph Laplacian for undirected graphs is a symmetric difference operator $\mathbf{L} = \mathrm{D - W}$, where $\mathrm{D}$ is the degree matrix of the graph and $\mathrm{W}$ is the weight matrix of the graph. However, in case of directed graphs (or digraphs), the weight matrix $\mathrm{W}$ of a graph is not symmetric. In addition, the degree of a vertex can be defined in two ways~\cite{Newmanbook2010} --- in-degree and out-degree. In-degree of a node~$i$ is estimated as $d_i^{in} = \sum_{j=1}^N w_{ij}$, whereas, out-degree of the node~$i$ can be calculated as $d_i^{out} = \sum_{j=1}^N w_{ji}$. We consider in-degree matrix and define the directed Laplacian $\mathbf{L}$ of a graph as
\begin{equation}
\label{eq:outLap}
\mathbf{L} = \mathrm{D_{in} - W},
\end{equation}
\noindent where $\mathrm{D_{in}} = \mathrm{diag}\left(\{d_i^{in}\}_{i=1}^{N}\right)$ is the in-degree matrix. Fig.~\ref{fig:dirGraphExample}(a) shows an example weighted directed graph and the corresponding matrices are shown in Fig.~\ref{fig:dirGraphExample}(b)-(d). Clearly, the Laplacian for directed graph is not symmetric, nevertheless, it follows some important properties: (i) sum of each row is zero and hence, $\lambda = 0$ is surely an eigenvalue, and (ii) real parts of the eigenvalues are non-negative for a graph with positive edge-weights.\\
\begin{figure}[h]
\centering
\begin{subfigure}[t]{0.24\textwidth}
\centering	
\includegraphics[scale=0.6]{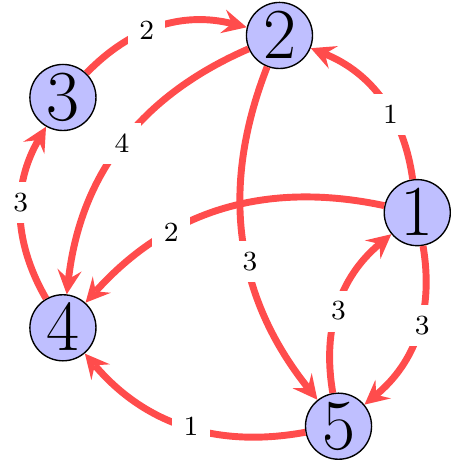}
\caption{An example directed graph.}
\end{subfigure}
\begin{subfigure}[t]{0.24\textwidth}
\centering
\footnotesize
\vspace{-2.2cm}
$\mathrm{W}$ = 
$ \begin{bmatrix}
        		0 & 0 & 0 & 0 & 3 \\
    			1 & 0 & 2 & 0 & 0 \\
    			0 & 0 & 0 & 3 & 0 \\
    			2 & 4 & 0 & 0 & 1 \\
    			3 & 3 & 0 & 0 & 0
  				\end{bmatrix}$
\caption{Weight matrix.}
\end{subfigure}\\
\vspace{0.4cm}
\begin{subfigure}[t]{0.24\textwidth}
\centering
\footnotesize
$\mathrm{D_{in}}$ = 
$ \begin{bmatrix}
        		3 & 0 & 0 & 0 & 0 \\
    			0 & 3 & 0 & 0 & 0 \\
    			0 & 0 & 3 & 0 & 0 \\
    			0 & 0 & 0 & 7 & 0 \\
    			0 & 0 & 0 & 0 & 6
  				\end{bmatrix}$
\caption{In-degree matrix.}
\end{subfigure}
\begin{subfigure}[t]{0.24\textwidth}
\centering
\footnotesize
$\mathrm{L}$ = 
$ \begin{bmatrix}
        		3 & 0 & 0 & 0 & -3 \\
    			-1 & 3 & -2 & 0 & 0 \\
    			0 & 0 & 3 & -3 & 0 \\
    			-2 & -4 & 0 & 7 & -1 \\
    			-3 & -3 & 0 & 0 & 6
    			\end{bmatrix}$
\caption{Laplacian matrix.}
\end{subfigure}
\caption{A directed graph and the corresponding matrices.}
\label{fig:dirGraphExample}
\end{figure}
\indent There also exist a few other definitions of graph Laplacian (normalized as well as combinatorial) for directed graphs~\cite{Li2012,Chung2005}. However, we choose Eq.~(\ref{eq:outLap}) as the definition of graph Laplacian for our analysis.
%****************************************************************
%==========================================================================
\subsection{Shift Operator}
\label{subsec:ShiftOp}
We identify the shift operator from the graph structure corresponding to a discrete time periodic signal and then extend it to arbitrary graphs. \\
%%%%%%%%%%%%%%%%%%%%%%%%%%
\indent A finite duration periodic discrete signal can be thought of as a graph signal lying on a directed cyclic graph shown in Fig.~\ref{fig:dirCyclic}. Indeed, Fig.~\ref{fig:dirCyclic} is the support of a periodic time-series having a period of five samples. The directed Laplacian of the graph is
\begin{figure}[h]
\centering
\includegraphics[scale=0.5]{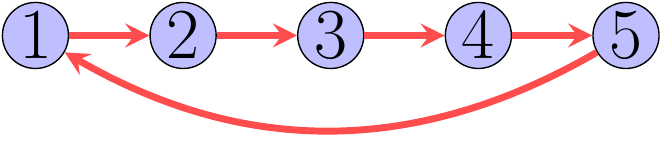}
\caption{A directed cyclic (ring) graph of five nodes. This graph is the underlying structure of a finite duration periodic (with period five) discrete signal. All edges have unit weights.}
\label{fig:dirCyclic}
\end{figure}
\begin{equation}
\label{eq:LapClassical}
\mathbf{L} = \begin{bmatrix}
        		1 & 0 & 0 & 0 & -1 \\
    			-1 & 1 & 0 & 0 & 0 \\
    			0 & -1 & 1 & 0 & 0 \\
    			0 & 0 & -1 & 1 & 0 \\
    			0 & 0 & 0 & -1 & 1
  				\end{bmatrix}.
\end{equation}
\noindent Consider a discrete time finite duration periodic signal $\mathbf{x}~=~[9~7~1~0~6]^T$ defined on the graph shown in Fig.~\ref{fig:dirCyclic}. Shifting the signal $\mathbf{x}$ by one unit to the right results in the signal $\mathbf{\tilde{x}} = [6~9~7~1~0]^T$. This shifted version of the signal can also be found as 
\begin{equation*}
\mathbf{\tilde{x}} = \mathbf{S}\mathbf{x} = (\mathrm{I} - \mathbf{L})\mathbf{x} = \begin{bmatrix}
        		0 & 0 & 0 & 0 & 1 \\
    			1 & 0 & 0 & 0 & 0 \\
    			0 & 1 & 0 & 0 & 0 \\
    			0 & 0 & 1 & 0 & 0 \\
    			0 & 0 & 0 & 1 & 0
  				\end{bmatrix} 
  				\begin{bmatrix}
        		9 \\
    			7 \\
    			1\\
    			0\\
    			6
  				\end{bmatrix} = \begin{bmatrix}
        		6 \\
    			9 \\
    			7\\
    			1\\
    			0
  				\end{bmatrix},
\end{equation*}
\noindent where $\mathbf{S} = (\mathrm{I} -\mathbf{L})$ is treated as the shift operator (matrix). \\
\indent We extend the notion of shift to arbitrary graphs and consequently use $\mathbf{S} = (\mathrm{I} -\mathbf{L})$ as the shift operator for graph signals. Hence, the shifted version of a graph signal $\mathbf{f}$ can be calculated as
\begin{equation}
\label{eq:shift}
\mathbf{\tilde{f}} = \mathbf{S}\mathbf{f} = (\mathrm{I} -\mathbf{L}) \mathbf{f}.
\end{equation} 
\indent In contrast to the use of the weight matrix as the shift operator, we opt for the pre-mentioned shift operator which involves the directed Laplacian. Our selection of this shift matrix results in a better and simpler frequency analysis which will become evident in the subsequent sections. 
%==========================================================================
\subsection{Total Variation}
\label{subsec:TV}
Total variation (TV) of a graph signal is a measure of total amplitude oscillations in the signal values with respect to the graph. As discussed in~\cite{Moura2014}, TV of a graph signal $\mathbf{f}$ with respect to the graph $\cal{G}$ can be given as
\begin{equation}
\label{eq:TV1}
\mathrm{TV}_{\cal{G}}(\mathbf{f}) = \sum_{i=1}^{N} \lvert \nabla_i(\mathbf{f}) \rvert,
\end{equation}
\noindent where $\nabla_i(\mathbf{f})$ is the derivative of the graph signal $\mathbf{f}$ at vertex $i$ and is defined as the difference between the values of the original graph signal $\mathbf{f}$ and its shifted version at vertex~$i$:
\begin{equation}
\label{eq:graphGrad}
\nabla_i(\mathbf{f}) = (\mathbf{f} - \mathbf{\tilde{f}} ) (i).
\end{equation}
\noindent From Eq.~(\ref{eq:shift}),~(\ref{eq:TV1}),~and~(\ref{eq:graphGrad}), we have
\begin{equation}
\label{eq:graphTV}
\mathrm{TV}_{\cal{G}}(\mathbf{f}) = \sum_{i=1}^{N} \rvert  f(i) - \tilde{f}(i)  \lvert 
= \lvert \lvert \mathbf{f} - \tilde{\mathbf{f}}  \rvert \rvert_1 
= \lvert \lvert  \mathbf{Lf} \rvert \rvert_1.
\end{equation}

\indent Observe that the quantity $\mathbf{Lf}$ at node $i$ is the sum of the weighted differences (weighted by corresponding edge weights) between value of $\mathbf{f}$ at node $i$ and values at the neighboring nodes. In other words, $(\mathbf{Lf})(i) =  \nabla_i(\mathbf{f})$ provides a measure of variations in the signal values as we move from node~$i$ to its adjacent nodes. Thus,~$\ell_1$-norm of the quantity~$\mathbf{Lf}$ can be interpreted as the absolute sum of the \textit{local variations} in~$\mathbf{f}$. We will utilize TV given by Eq.~(\ref{eq:graphTV}) to estimate variations of the graph Fourier basis and subsequently to identify low and high frequencies of a graph.
%==========================================================================
\subsection{Graph Fourier Transform based on Directed Laplacian}
\label{subsec:GFTDL}
DSP$_\mathrm{G}$ derives analogy from the perspective of LSI filtering. In DSP$_\mathrm{G}$, LSI graph filters are polynomials in the graph weight matrix $\mathrm{W}$ and as a result the eigenvectors of $\mathrm{W}$ become the eigenfunctions of LSI graph filters. Analogous to the fact that the complex exponentials are invariant to LSI filtering in classical signal processing, the eigenvectors of $\mathrm{W}$ are utilized as the graph Fourier basis. \\
\indent We redefine the Graph Fourier Transform under DSP$_\mathrm{G}$. We consider the Jordan eigenvectors of the graph Laplacian matrix as the graph Fourier basis that are invariant to an LSI graph filter (discussed in Section~\ref{sec:Filtering}), which is a polynomial in~$\mathbf{L}$ under proposed shift operator. Using Jordan decomposition, the graph Laplacian is decomposed as
\begin{equation}
\label{eq:Lap_jordan}
\mathbf{L} = \mathbf{VJV}^{-1},
\end{equation}
\noindent where $\mathbf{J}$, known as the Jordan matrix, is a block diagonal matrix similar to $\mathbf{L}$ and the Jordan eigenvectors of $\mathbf{L}$ constitute the columns of $\mathbf{V}$. We define GFT of a graph signal $\mathbf{f}$ as
\begin{equation}
\label{eq:GFTDL}
\mathbf{\hat{f}} = \mathbf{V}^{-1}\mathbf{f}.
\end{equation}

\noindent Here, $\mathbf{V}$ is treated as the graph Fourier matrix whose columns constitute the graph Fourier basis. Inverse Graph Fourier Transform can be calculated as
\begin{equation}
\label{eq:GFTDLinverse}
\mathbf{f} = \mathbf{V}\mathbf{\hat{f}}.
\end{equation}
\indent In this definition of GFT, the eigenvalues of the graph Laplacian act as the graph frequencies and the corresponding Jordan eigenvectors act as the graph harmonics. The eigenvalues with small absolute value correspond to low frequencies and vice-versa; we will prove that shortly. Thus, the frequency order turns out to be \textit{natural}. \\
\indent Before discussing ordering of frequency, we consider a special case when the Laplacian matrix is diagonalizable.
%--------------------------------------------------------------------------
\subsubsection{Diagonalizable Laplacian Matrix}
\label{subsubsec:DiagonalizableLap}
When the graph Laplacian is diagonalizable, Eq.~(\ref{eq:Lap_jordan}) is reduced to:
\begin{equation}
\label{eq:DiagonalizableGFTDL}
\mathbf{L = V \Lambda V}^{-1}.
\end{equation}
Here, $\Lambda \in \mathbb{C}^{N\times N}$ is a diagonal matrix containing the eigenvalues $\lambda_0, \lambda_1,\ldots,\lambda_{N-1}$ of $\mathbf{L}$ and $V=[\mathbf{v_0,v_1,\ldots,v_{N-1}}] \in \mathbb{C}^{N\times N}$ is the matrix with columns as the corresponding eigenvectors of $\mathbf{L}$. Note that for a graph with real non-negative edge weights, the graph spectrum will lie in the right half of the complex frequency plane (including the imaginary axis).
%--------------------------------------------------------------------------
\subsubsection{Undirected Graphs}
\label{subsubsec:UndirCase}
For an undirected graph with real weights, the graph Laplacian matrix $\mathbf{L}$ is real and symmetric. As a result, the eigenvalues of $\mathbf{L}$ turn out to be real and $\mathbf{L}$ constitutes orthonormal set of eigenvectors. Hence, the Jordan form of the Laplacian matrix for undirected graphs can be written as
\begin{equation}
\mathbf{L = V\Lambda V}^T,
\end{equation}
where $\mathbf{V}^T = \mathbf{V}^{-1}$, because the eigenvectors of $\mathbf{L}$ are orthogonal in undirected case. Consequently, GFT of a signal $\mathbf{f}$ can be given as $\mathbf{\hat{f}} = \mathbf{V}^T\mathbf{f}$ and the inverse can be calculated as $\mathbf{f} = \mathbf{V}\mathbf{\hat{f}}$. Note that the graph spectrum will lie on the real axis of the complex frequency plane, given the weight matrix is real. Moreover, if the weights are non-negative as well, the graph spectrum will lie on the non-negative half of the real axis. This coincides with the GFT presented in~\cite{Wavelet2011,Shuman2013}, where only undirected graphs with real and non-negative weights were considered. Thus, the new definition of GFT unifies DSP$_\mathrm{G}$ and Laplacian based approach.
%--------------------------------------------------------------------------
\subsubsection{Frequency Ordering}
\label{subsubsec:FrOrdering}
We utilize the definition of TV given by Eq.~(\ref{eq:graphTV}) to quantify oscillations in the graph harmonics, and subsequently to order the frequencies. The eigenvalues for which the corresponding proper eigenvectors have small variations are labeled as low frequencies and vice-versa. The frequency ordering is established by Theorem~\ref{th:frequency_ordering}. 
\begin{theorem}
\label{th:frequency_ordering}
Let $\mathbf{v}_i$ and $\mathbf{v}_j$ be the eigenvectors with unit $\ell_1$-norm corresponding to two distinct complex eigenvalues $\lambda_i,\lambda_j \in \mathbb{C}$ of the Laplacian matrix $\mathbf{L}$ of the graph $\cal{G}$. If $\lvert \lambda_i \rvert > \lvert \lambda_j \rvert $, then the TVs of these eigenvectors with respect to the graph $\cal{G}$ satisfy
\begin{equation}
\label{eq:frequency_ordering}
\mathrm{TV}_{\cal{G}}(\mathbf{v}_i) > \mathrm{TV}_{\cal{G}}(\mathbf{v}_j).
\end{equation}
\end{theorem}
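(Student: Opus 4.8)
The plan is to exploit the identity established in Eq.~(\ref{eq:graphTV}), namely that the total variation of any graph signal equals the $\ell_1$-norm of its image under the Laplacian: $\mathrm{TV}_{\cal{G}}(\mathbf{f}) = \lVert \mathbf{Lf} \rVert_1$. The key observation is that for a \emph{proper} eigenvector this expression collapses to a single scalar, so that the comparison of total variations reduces to a comparison of eigenvalue magnitudes.

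First I would substitute $\mathbf{f} = \mathbf{v}_i$ into the total-variation formula and invoke the eigenvalue relation $\mathbf{L}\mathbf{v}_i = \lambda_i \mathbf{v}_i$, which holds exactly because $\mathbf{v}_i$ is a proper (as opposed to generalized) eigenvector. This immediately gives $\mathrm{TV}_{\cal{G}}(\mathbf{v}_i) = \lVert \lambda_i \mathbf{v}_i \rVert_1$. Using absolute homogeneity of the $\ell_1$-norm, I would then pull the scalar outside, obtaining $\lVert \lambda_i \mathbf{v}_i \rVert_1 = \lvert \lambda_i \rvert \, \lVert \mathbf{v}_i \rVert_1$. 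Since by hypothesis $\mathbf{v}_i$ has unit $\ell_1$-norm, $\lVert \mathbf{v}_i \rVert_1 = 1$, and therefore $\mathrm{TV}_{\cal{G}}(\mathbf{v}_i) = \lvert \lambda_i \rvert$. The identical argument applied to $\mathbf{v}_j$ yields $\mathrm{TV}_{\cal{G}}(\mathbf{v}_j) = \lvert \lambda_j \rvert$.

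With both total variations expressed in closed form, the conclusion is immediate: the assumption $\lvert \lambda_i \rvert > \lvert \lambda_j \rvert$ translates directly into $\mathrm{TV}_{\cal{G}}(\mathbf{v}_i) > \mathrm{TV}_{\cal{G}}(\mathbf{v}_j)$, which is Eq.~(\ref{eq:frequency_ordering}).

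There is no deep obstacle in this argument; it is essentially a one-line computation. The only points that genuinely require care are twofold. The restriction to proper eigenvectors is essential, because for a generalized eigenvector the relation $\mathbf{L}\mathbf{v} = \lambda \mathbf{v}$ fails, and the total variation would instead pick up additional contributions governed by the Jordan-chain structure, breaking the clean equality $\mathrm{TV}_{\cal{G}}(\mathbf{v}) = \lvert \lambda \rvert$. Likewise, the unit-$\ell_1$-norm normalization is what pins the total variation to exactly $\lvert \lambda \rvert$ rather than to a scaled multiple of it, so it is crucial that $\mathbf{v}_i$ and $\mathbf{v}_j$ be normalized the same way for the comparison to carry meaning. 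Establishing these two caveats, rather than the algebra itself, is where I would focus the rigor of the write-up.
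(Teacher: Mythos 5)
Your proof is correct and follows essentially the same route as the paper's own argument: apply the identity $\mathrm{TV}_{\cal{G}}(\mathbf{f}) = \lVert \mathbf{Lf} \rVert_1$ to a proper eigenvector, use $\mathbf{L}\mathbf{v} = \lambda \mathbf{v}$ and homogeneity of the $\ell_1$-norm to get $\mathrm{TV}_{\cal{G}}(\mathbf{v}) = \lvert \lambda \rvert \lVert \mathbf{v} \rVert_1$, and conclude from the unit-norm hypothesis. Your explicit remarks on why the eigenvector must be proper and why the common normalization matters are, if anything, slightly more careful than the paper's write-up.
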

\begin{proof}
Let $\mathbf{v}_r$ be the proper eigenvector corresponding to eigenvalue $\lambda_r$ of the Laplacian matrix $\mathbf{L}$ of the graph $\cal{G}$, then $\mathbf{Lv}_r = \lambda_r \mathbf{v}_r$. Now, using Eq.~(\ref{eq:graphTV}), TV of $\mathbf{v}_r$ with respect to the graph $\cal{G}$ is calculated as 
\begin{equation*}
\mathrm{TV}_{\cal{G}}(\mathbf{v}_r) = \lvert \lvert  \mathbf{Lv}_r  \rvert \rvert_1
							=   \lvert \lvert  \lambda_r \mathbf{v}_r  \rvert \rvert_1
							=  \lvert \lambda_r \rvert (\lvert \lvert  \mathbf{v}_r  \rvert \rvert_1).
\end{equation*}
If we scale all eigenvectors to have the same $\ell_1$-norm, then from the above analysis we note that 
\begin{equation}
\mathrm{TV}_{\cal{G}}(\mathbf{v}_r) \propto \lvert \lambda_r \rvert.
\end{equation}
\end{proof}
\begin{figure}[h]
\centering
  \begin{subfigure}[t]{0.26\textwidth}
				\centering
                \includegraphics[scale=0.36]{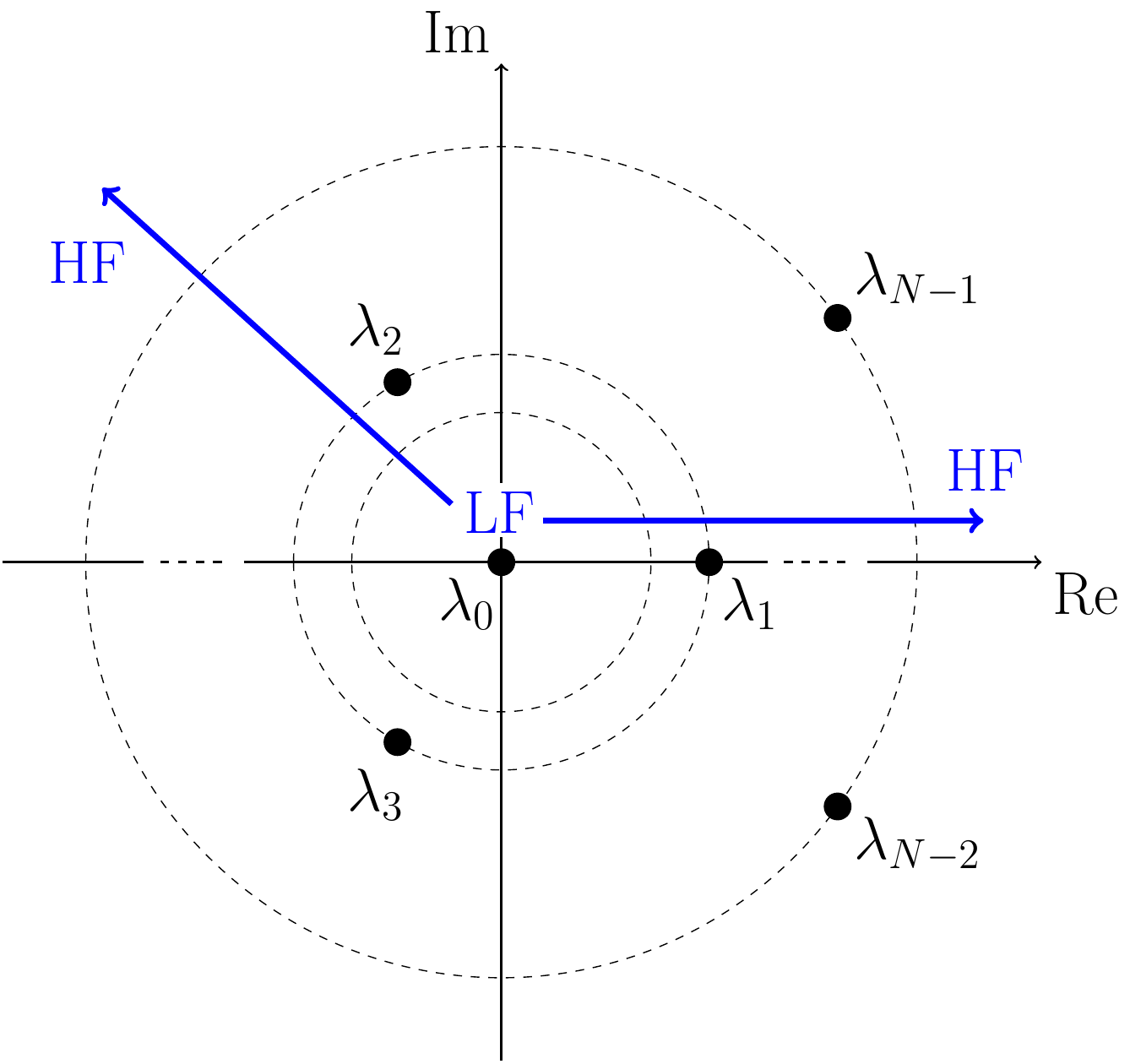}
                \caption{Ordering for a graph with positive or negative edge weights.}
        \end{subfigure} \hspace*{0.6cm}
        \begin{subfigure}[t]{0.18\textwidth}
				\centering
                \includegraphics[scale=0.36]{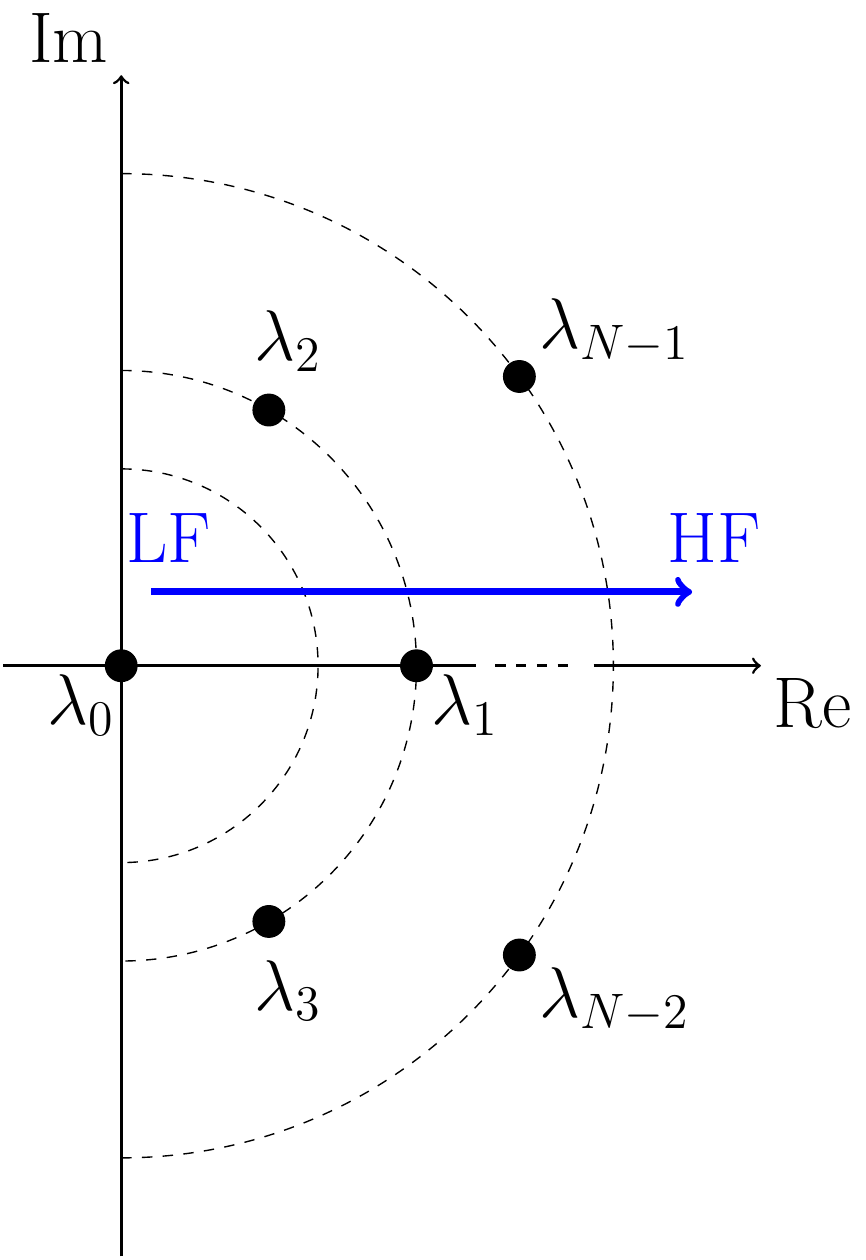}
				\caption{Ordering for a graph with positive edge weights.}        
        \end{subfigure}\\        
        \begin{subfigure}[t]{0.26\textwidth}
				\centering
                \includegraphics[scale=0.36]{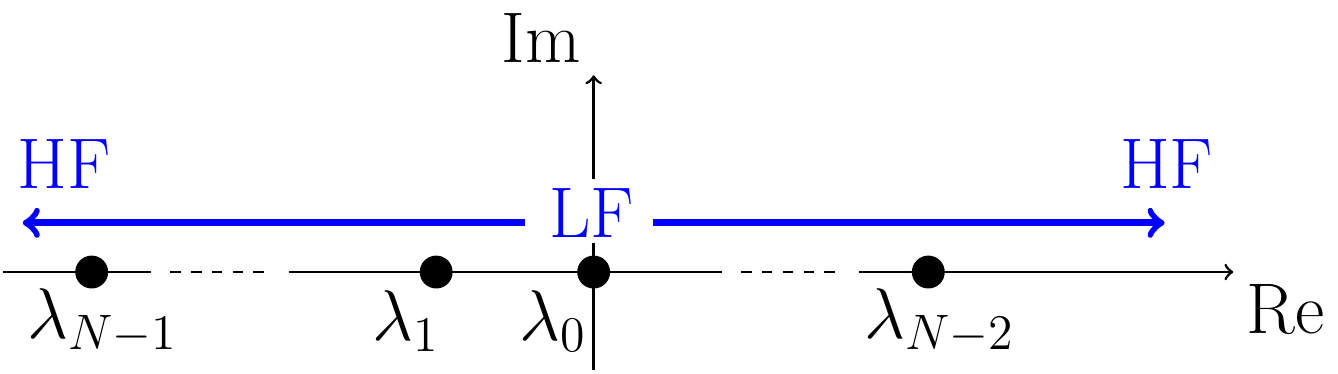}
                \caption{Ordering for an undirected graph with real edge weights.}
        \end{subfigure} \hspace*{0.6cm}
        \begin{subfigure}[t]{0.18\textwidth}
				\centering
                \includegraphics[scale=0.36]{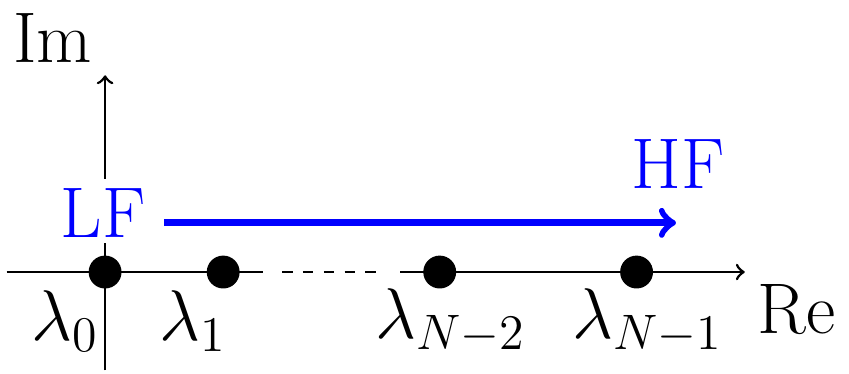}
				\caption{Ordering for an undirected graph with real and non-negative edge weights.}        
        \end{subfigure}     
\caption{Frequency ordering from low frequencies (LF) to high frequencies
(HF). As we move away from the origin (zero frequency) in the complex frequency plane,  the eigenvalues correspond to higher frequencies because the TVs of the corresponding eigenvectors increase.}
\label{fig:freqOrder}
\end{figure}
\indent Note that TV of a proper eigenvector is directly proportional to the absolute value of the corresponding eigenvalue. Therefore, all the proper eigenvectors corresponding to the eigenvalues having equal absolute value will have the same TV. As a result, distinct eigenvalues may sometimes yield exactly same TV. Because of this reason, sometimes frequency ordering is not unique. However, if all the eigenvalues are real, the frequency ordering is guaranteed to be unique. Fig.~\ref{fig:freqOrder} shows the visualization of frequency ordering in the complex frequency plane; $\lambda_0 = 0$ corresponds to zero frequency and as we move away from the origin, the eigenvalues correspond to higher frequencies. \\
%%%%%%%%%%%%%%%%%%%%%
\indent An important point worth mentioning here is that we achieve same order of frequency as given by Theorem~\ref{th:frequency_ordering} even if we use quadratic form (2-Dirichlet form) in place of TV. 2-Dirichlet form~\cite{Moura2014} of a signal $\mathbf{f}$ is defined as
\begin{equation}
\label{eq:2Dirichlet}
S_2(\mathbf{f})=\frac{1}{2}\sum_{i=1}^{N} \lvert \nabla_i(\mathbf{f}) \rvert ^2.
\end{equation}
Substituting $\nabla_i(\mathbf{f})$ from Eq.~(\ref{eq:graphGrad}), we have
\begin{equation}
\label{eq:GraphQuadraticForm}
S_2(\mathbf{f})  = \frac{1}{2} \sum_{i=1}^{N} \lvert  f(i) - \tilde{f}(i)  \rvert^2 
 = \frac{1}{2} \lvert \lvert \mathbf{f} - \tilde{\mathbf{f}}  \rvert \rvert_2^2 
 = \frac{1}{2} \lvert \lvert  \mathbf{Lf} \rvert \rvert_2^2.
\end{equation}
Now, quadratic form of a proper eigenvector $\mathbf{v}_r$ corresponding to frequency $\lambda_r$ can be calculated as $S_2(\mathbf{v}_r) = \frac{1}{2}\lvert \lvert  \mathbf{Lv}_r \rvert \rvert_2^2 = \frac{1}{2}\lvert \lvert \lambda_r \mathbf{v}_r \rvert \rvert_2^2=\frac{1}{2} \lvert \lambda_r \rvert^2  \lvert \lvert  \mathbf{v}_r \rvert \rvert_2^2$. Therefore, if all the eigenvectors are scaled to have same $\ell_2$-norm, then
\begin{equation}
\mathrm{S}_2(\mathbf{v}_r) \propto \lvert \lambda_r \rvert^2.
\end{equation}
\indent Therefore, the frequency ordering based on the quadratic form will be same as given by Theorem~\ref{th:frequency_ordering}. 
%-----------------------------------------------------------------------------
\subsubsection{Example}
\label{subsubsec:exampleGFT}
Let us consider a directed graph shown in Fig.~\ref{fig:dirGraphExample}(a). Performing Jordan decomposition of the Laplacian matrix, we find the Fourier matrix $\mathbf{V}$ and Jordan matrix $\mathbf{J}$ as follows.
\begin{equation*}
\scriptsize
\mathbf{V} =  \begin{bmatrix}
        		0.447 & 0.680 & -0.232 - 0.134i & -0.232 + 0.134i& -0.535 \\
    			0.447 & -0.502 & 0.232 + 0.312i &0.232 - 0.312i & 0.080 \\
    			0.447 & -0.502 & -0.502 - 0.201i & -0.502 + 0.201i& 0.080 \\
    			0.447 & -0.108 & 0.618 - 0.089i & 0.618 + 0.089i & -0.125 \\
    			0.447 & 0.146 & 0.309 & 0.309 & 0.828
  				\end{bmatrix}
\end{equation*}
\begin{equation*}
\small
\mathbf{J} =  \mathrm{diag}\{0,~ 2.354,~6.000 - 1.732i,~ 6.000 + 1.732i,~7.646\}.
\end{equation*}
%\begin{equation*}
%V^{-1} =   \begin{bmatrix}
%        		0.6880 & 0.6020 & 0.4013 & 0.1720 & 0.3727 \\
%    			0.7132 & -0.1956 & 0.6059 & -0.3912 & 0.4795 \\
%    			-0.1848 - 0.2643i & -0.0964 - 1.1130i & -0.2731 + 0.5843i & 0.5543 + 0.7930i & 0 \\
%    			 -0.1848 + 0.2643i & -0.0964 + 1.1130i & -0.2731 - 0.5843i & 0.5543 - 0.7930i & 0 \\
%    			-0.3598 & -0.2186 & 0.0941 & -0.4373 & 0.9216
%  				\end{bmatrix}
%\end{equation*}
\noindent Eigenvalue~$\lambda~=~7.646$ corresponds to the highest frequency of the graph. Also note that the TVs of the eigenvectors corresponding to the frequencies $\lambda = 6 - 1.732i$ and $\lambda = 6 + 1.732i$ are equal because both of the frequencies have same absolute value. The magnitudes of the GFT coefficients of the graph signal~$\mathbf{f}=[0.12~0.38~0.81~0.24~0.88]^T$ are plotted in Fig.~\ref{fig:dirGraphspectrum}.
\begin{figure}[h]
\centering
\includegraphics[scale=0.23]{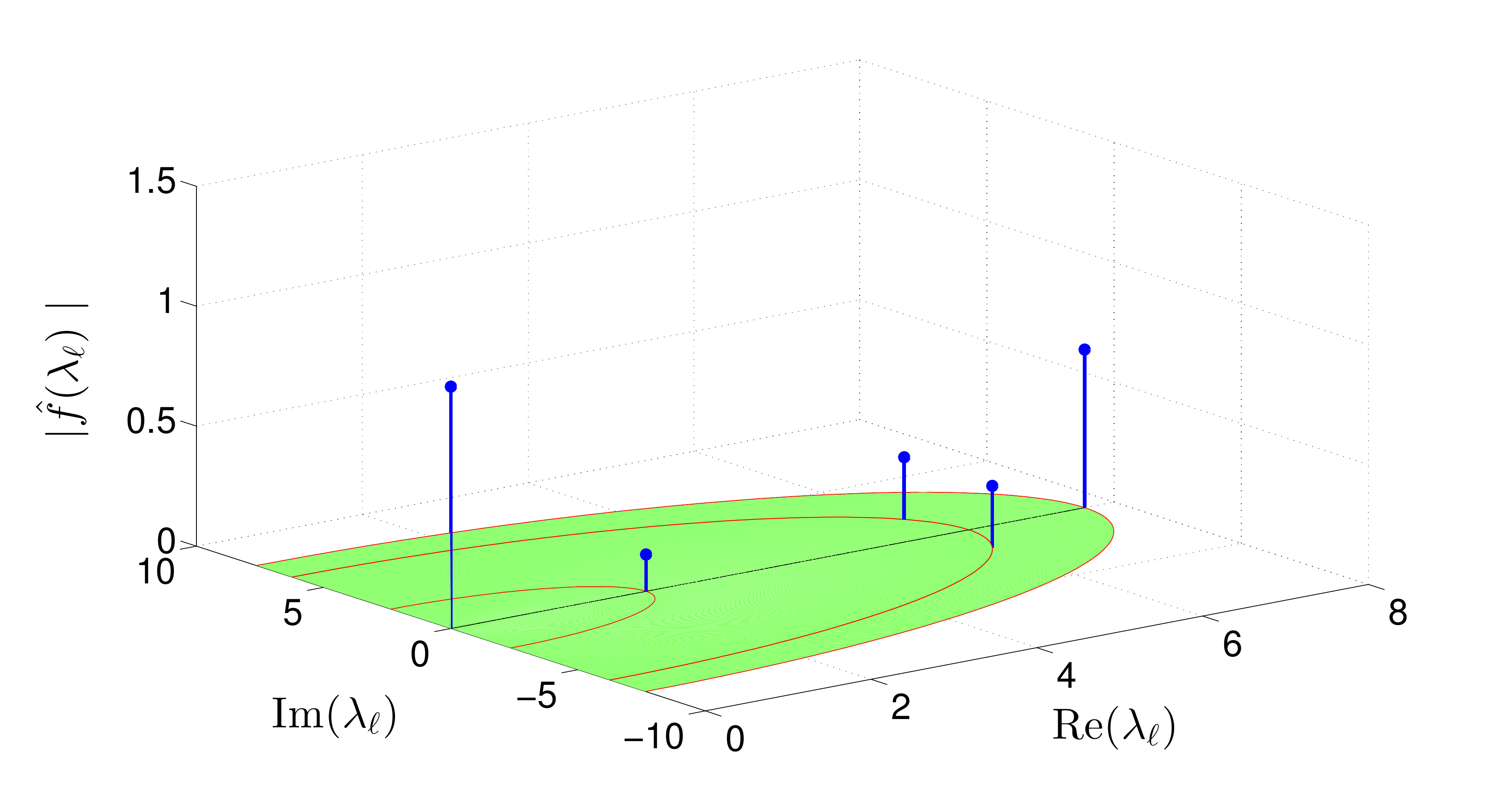}
\caption{Magnitude spectrum of the graph signal~$\mathbf{f}=[0.12~0.38~0.81~0.24~0.88]^T$ defined on the graph shown in Fig.~\ref{fig:dirGraphExample}(a).}
\label{fig:dirGraphspectrum}
\end{figure}
%-------------------------------------------------------------------------
\subsubsection{Concept of Zero Frequency}
The Jordan eigenvector of $\mathbf{L}$ corresponding to the zero eigenvalue is given as~$\mathbf{v}_0~=~\frac{1}{\sqrt{N}}[1~1\ldots~1]^T$. Therefore, for a constant graph signal, GFT will have only a single non-zero coefficient at zero frequency (eigenvalue). For example, consider a constant graph signal $\mathbf{f} = [1~1~1~1~1]^T$ residing on the graph shown in Fig.~\ref{fig:dirGraphExample}(a). GFT of the signal is given by $\mathbf{\hat{f}} = [\sqrt{5}~0~0~0~0]^T$, magnitudes of which are plotted in Fig.~\ref{fig:dirGraphspectrumconst}.
\begin{figure}[h]
\centering
\includegraphics[scale=0.23]{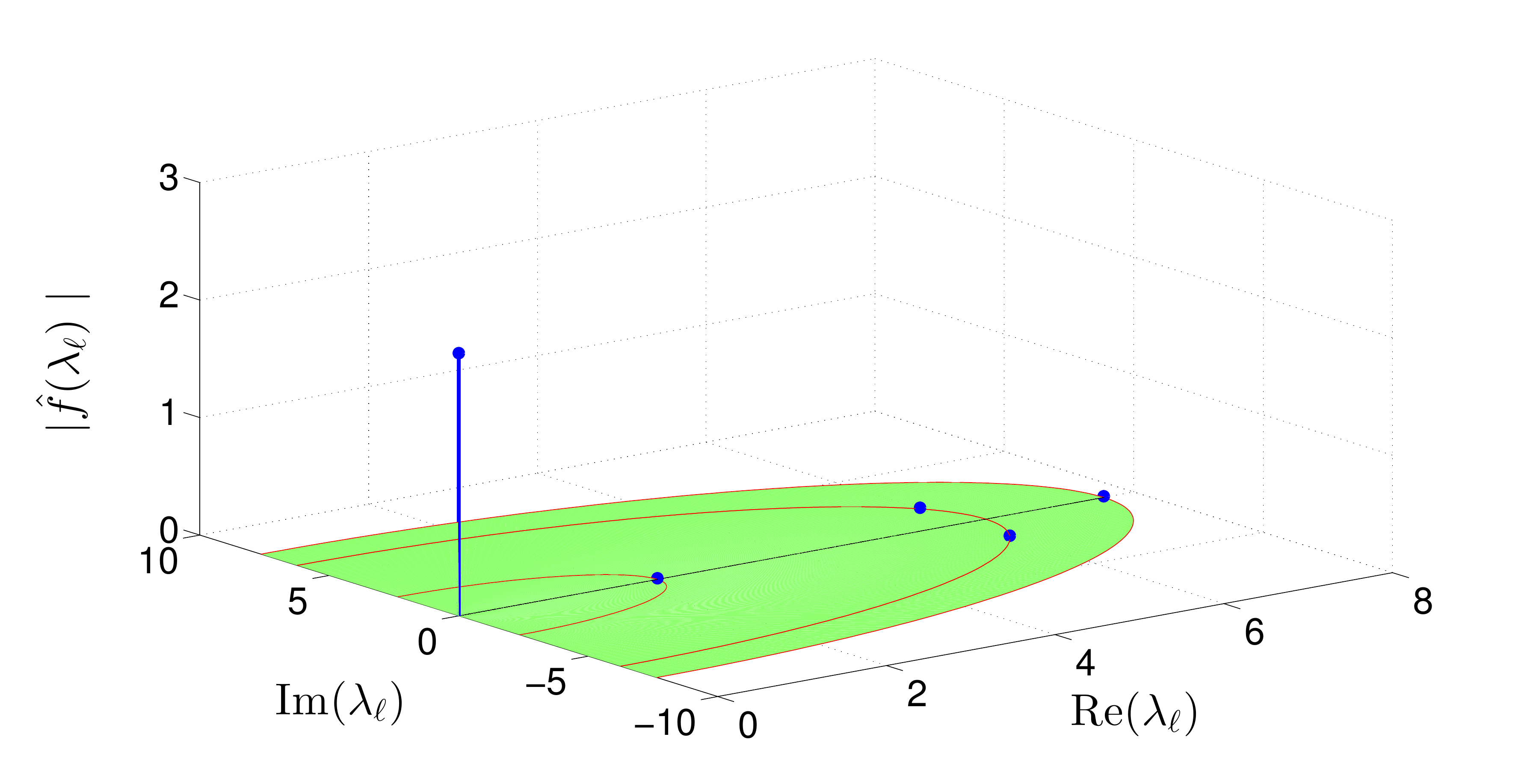}
\caption{Magnitude spectrum of a constant signal $f = [1~1~1~1~1]^T$ defined on the graph shown in Fig.~\ref{fig:dirGraphExample}(a).}
\label{fig:dirGraphspectrumconst}
\end{figure}

\indent We can observe the presence of only zero frequency component in the spectrum of a constant graph signal. This is ``an evidence'' to the intuition that the variation in a constant graph signal is zero as we travel from a node to the other node connected by a directed edge. In contrast, the graph Fourier transform defined in DSP$_\mathrm{G}$ fails to give this basic intuition. 
%%%%%%%%%%%%%%%%%%%%%%%%%%%%%%%%%%%%%%%%%%%%%%
\section{Linear Shift Invariant Graph Filters}
\label{sec:Filtering}
An LSI graph filter is a linear filter for which shifted version of the filter output is same as the filter output to the shifted input. That is, if $\mathbf{S}(\mathbf{f}_{out}) = \mathbf{H}(\mathbf{S}\mathbf{f}_{in})$, where $\mathbf{f}_{out}$ is the output of the filter $\mathbf{H}$ corresponding to input $\mathbf{f}_{in}$, then the filter $\mathbf{H}$ is shift-invariant (SI). The condition for a graph filter to be LSI is given by Theorem~\ref{th:filterLSI}.
\begin{theorem}
\label{th:filterLSI}
A graph filter $\mathbf{H}$ is LSI if the following conditions are satisfied:
\begin{enumerate}
\item Geometric multiplicity of each distinct eigenvalue of the graph Laplacian is one.
\item The graph filter $\mathbf{H}$ is a polynomial in $\mathbf{L}$, i.e., $\mathbf{H}$ can be written as
\end{enumerate}
\begin{equation}
\label{eq:LSI_filter}
\mathbf{H} = h(\mathbf{L}) = \sum_{m=0}^{M-1}h_m\mathbf{L}^{m} = h_0 \mathbf{I} + h_1 \mathbf{L} + \ldots +h_{M-1} \mathbf{L}^{M-1},
\end{equation}
\indent where $h_0,h_1,~\ldots~,h_{M-1} \in \mathbb{C}$ are called filter taps.
\end{theorem}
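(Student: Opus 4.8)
The plan is to recast shift-invariance as a commutation relation and then appeal to the structure of the commutant of $\mathbf{L}$. By the definition given just above the statement, $\mathbf{H}$ is shift-invariant precisely when $\mathbf{S}(\mathbf{f}_{out}) = \mathbf{H}(\mathbf{S}\mathbf{f}_{in})$ holds for every input, i.e. when $\mathbf{S}\mathbf{H}\mathbf{f}_{in} = \mathbf{H}\mathbf{S}\mathbf{f}_{in}$ for all $\mathbf{f}_{in}$, which is the operator identity $\mathbf{S}\mathbf{H} = \mathbf{H}\mathbf{S}$. Since the proposed shift operator is $\mathbf{S} = \mathrm{I} - \mathbf{L}$, this is equivalent to $\mathbf{L}\mathbf{H} = \mathbf{H}\mathbf{L}$; linearity of $\mathbf{H}$ is automatic because it is a matrix. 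Thus the whole statement reduces to describing which matrices commute with $\mathbf{L}$.

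First I would dispatch the sufficiency of condition~(2), which is the easy half. If $\mathbf{H} = h(\mathbf{L}) = \sum_{m=0}^{M-1} h_m \mathbf{L}^m$, then every power $\mathbf{L}^m$ commutes with $\mathbf{L}$, so $\mathbf{H}\mathbf{L} = \mathbf{L}\mathbf{H}$ and hence $\mathbf{H}$ commutes with $\mathbf{S}$ and is shift-invariant. This step uses neither the Jordan structure nor condition~(1): any polynomial in $\mathbf{L}$ is automatically LSI. The role of condition~(1) only emerges in the converse, so I would prove it in order to show that these two hypotheses \emph{characterize} the LSI filters, matching the DSP$_\mathrm{G}$ result: under condition~(1) the LSI filters are exactly the polynomials in $\mathbf{L}$.

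For the converse I would pass to the Jordan form $\mathbf{L} = \mathbf{V}\mathbf{J}\mathbf{V}^{-1}$ and study the commutant $\mathcal{C}(\mathbf{L}) = \{\mathbf{H} : \mathbf{H}\mathbf{L} = \mathbf{L}\mathbf{H}\}$. Writing $\mathbf{B} = \mathbf{V}^{-1}\mathbf{H}\mathbf{V}$ turns the condition into $\mathbf{B}\mathbf{J} = \mathbf{J}\mathbf{B}$, so I only need to describe matrices commuting with a Jordan matrix. The standard computation shows that a matrix commuting with a single Jordan block is an upper-triangular Toeplitz matrix, while the blocks linking Jordan blocks of distinct eigenvalues must vanish (a Sylvester-equation argument, since the spectra are disjoint). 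Condition~(1) — geometric multiplicity one for each distinct eigenvalue — means $\mathbf{J}$ has exactly one Jordan block per eigenvalue, so $\mathbf{L}$ is nonderogatory and its minimal polynomial equals its characteristic polynomial. Counting the free parameters of the block-diagonal Toeplitz form then gives $\dim \mathcal{C}(\mathbf{L}) = N$.

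The main obstacle is the final identification, which hinges on the commutant-dimension count. The polynomials $\{p(\mathbf{L})\}$ form a subspace of $\mathcal{C}(\mathbf{L})$ whose dimension equals the degree of the minimal polynomial of $\mathbf{L}$; under condition~(1) this degree is $N$, so the powers $\mathrm{I}, \mathbf{L}, \ldots, \mathbf{L}^{N-1}$ are linearly independent and span an $N$-dimensional space. Since $\{p(\mathbf{L})\} \subseteq \mathcal{C}(\mathbf{L})$ and both have dimension $N$, they coincide, so every matrix commuting with $\mathbf{L}$ — equivalently every LSI filter — is a polynomial in $\mathbf{L}$. The delicate points I expect to spend most effort on are the Toeplitz characterization of the per-block commutant and the dimension bookkeeping that forces $\dim \mathcal{C}(\mathbf{L}) = N$ exactly when the nonderogatory hypothesis~(1) holds.
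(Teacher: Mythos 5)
Your proposal is correct, and its skeleton matches the paper's proof exactly: both reduce shift-invariance via $\mathbf{S} = \mathrm{I} - \mathbf{L}$ to the commutation identity $\mathbf{L}\mathbf{H} = \mathbf{H}\mathbf{L}$, and both observe that any polynomial in $\mathbf{L}$ commutes with $\mathbf{L}$. The difference lies in what happens after that reduction. The paper stops there and cites the DSP$_\mathrm{G}$ literature for the claim that, when the geometric multiplicity of every distinct eigenvalue is one (equivalently, the minimal and characteristic polynomials of $\mathbf{L}$ coincide), commuting with $\mathbf{L}$ is equivalent to being a polynomial in $\mathbf{L}$. You instead prove that converse from scratch: conjugating by $\mathbf{V}$ to reduce to the commutant of the Jordan matrix, killing the off-diagonal blocks by a Sylvester-equation argument (valid since condition~(1) forces distinct eigenvalues across distinct blocks), identifying each diagonal block of the commutant as upper-triangular Toeplitz, and then matching $\dim \mathcal{C}(\mathbf{L}) = N$ against the $N$-dimensional space spanned by $\mathrm{I}, \mathbf{L}, \ldots, \mathbf{L}^{N-1}$ for a nonderogatory matrix. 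That buys a self-contained characterization rather than a citation. You also make a clarification the paper blurs: condition~(1) plays no role in the literal ``if'' direction of the theorem --- a polynomial in $\mathbf{L}$ is LSI unconditionally --- and is needed only for the converse, i.e., to conclude that \emph{every} LSI filter is such a polynomial. The paper's phrasing (``this is true when $\mathbf{H}$ is a polynomial in $\mathbf{L}$, given that\ldots the geometric multiplicity\ldots is one'') makes the sufficiency appear contingent on condition~(1), which it is not; your separation of the two directions is the cleaner account.
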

\begin{proof}
From Eq.~(\ref{eq:shift}), the shift operator is $\mathrm{S = I }- \mathbf{L}$. Then, for a filter $\mathbf{H}$ to be LSI, the following must be satisfied:
\begin{align*}
\mathbf{S}(\mathbf{Hf}) &= \mathbf{H}(\mathbf{S}\mathbf{f})\\
\mathrm{or,}\quad\quad (\mathrm{I} - \mathbf{L})(\mathbf{Hf}) &= \mathbf{H}((\mathrm{I} - \mathbf{L}) \mathbf{f})\\
\mathrm{or,}~~\quad\quad\quad\quad\quad\mathbf{LH} &= \mathbf{HL}.
\end{align*}
In other words, the filter is LSI if the matrices $\mathbf{L}$ and $\mathbf{H}$ commute. This is true when $\mathbf{H}$ is a polynomial in $\mathbf{L}$, given that the characteristic polynomial and the minimal polynomial of $\mathbf{L}$ are equal or the geometric multiplicity of every distinct eigenvalue is one~\cite{Moura2013}.
\end{proof}
Considering a $2$-tap graph filter and substituting $h_0=1$ and $h_1 = -1$ in Eq.~(\ref{eq:LSI_filter}), we get $\mathbf{H=S} = \mathrm{I}-\mathbf{L}$, which is the shift operator discussed in Section~\ref{subsec:ShiftOp}. Thus, the shift operator is a first order LSI filter. Also, the graph Fourier basis, which are the Jordan eigenvectors of $\mathbf{L}$, are the eigenfunctions of an LSI filter described by Eq.~(\ref{eq:LSI_filter}). 
%%%%%%%%%%%%%%%%%%%%%%%%%%%%%%%%%%%%%%%%%%%%%%%%%%%%%%%%%%%%%%%%%%%%%%%%%%%
%\section{Relationship with classical Discrete Fourier Transform}
%\label{sec:relation}
%%%%%%%%%%%%%%%%%%%%%%%%%%%%%%%%%%%%%%%%%%%%%%%%%%%%%%%%%%%%%%%%%%%%%%%%%%
%----------------------------------------------------------------------
%%%%%%%%%%%%%%%%%%%%%%%%%%%%%%%%%%%%%%%%%%%%%%%%%%%%%%%%%%%%%%%%%%%%%%%%%%
%%%%%%%%%%%%%%%%%%%%%%%%%%%%%%%%%%%%%%%%%%%%%%%%%%%%%%%%%%%%%%%%%%%%%%%%%%
\section{Conclusions}
\label{sec:Conclusion}
In this paper, under the DSP$_\mathrm{G}$ framework, we considered the Jordan eigenvectors of the directed Laplacian as graph harmonics and the corresponding eigenvalues as the graph frequencies, and utilized that to redefine GFT. For this purpose, we proposed a shift operator derived from the directed Laplacian of a graph. Our proposed definition of GFT under DSP$_\mathrm{G}$ framework, provides natural frequency interpretation and links DSP$_\mathrm{G}$ to the Laplacian based approach. We also showed that by considering our proposed shift operator, LSI filters become polynomial in the directed Laplacian.

%%%%%%%%%%%%%%%%%%%%%%%%%%%%%%%%%%%%%%%%%%%%%%%%%%%%%%%%%%%%%%%%%%%%%%%%%%

%%%%%%%%%%%%%%%%%%%%%%%%%%%%%%%%%%%%%%%%%%%%%%%%%%%%%%%%%%%%%%%%%%%%%%%%%%%
\bibliographystyle{IEEEtran}
%\balance

\end{document}